\newmdenv[
  topline=false,
  bottomline=false,
  rightline=false,
  skipabove=\topsep,
  skipbelow=\topsep
]{lined}
\newcommand{\eps}{\varepsilon}
\date{}
\title{Testing Identity of Distributions under Kolmogorov Distance in Polylogarithmic Space}
\author{
	Christian Janos Lebeda\thanks{This work was carried out while this author was at the IT University of Copenhagen.}\\
	\texttt{christian-janos.lebeda@inria.fr}\\
	Inria, University of Montpellier
    \and
	Jakub Tětek\thanks{This author was supported by the VILLUM Foundation grants 54451 and 16582. This research was partially funded from the Ministry of Education and Science of Bulgaria (support for INSAIT, part of the Bulgarian National Roadmap for Research Infrastructure).
 Part of this work was carried out while this author was at BARC, University of Copenhagen.}\\
	\texttt{j.tetek@gmail.com}\\
	INSAIT, Sofia University ``St. Kliment Ohridski"
}
\begin{document}
\maketitle

\begin{abstract}
Suppose we have a sample from a distribution $D$ and we want to test whether $D = D^*$ for a fixed distribution $D^*$. Specifically, we want to reject with constant probability, if the distance of $D$ from $D^*$ is $\geq \eps$ in a given metric. In the case of continuous distributions, this has been studied thoroughly in the statistics literature. Namely, for the well-studied Kolmogorov metric a test is known that uses the optimal $O(1/\eps^2)$ samples.

However, this test naively uses also space $O(1/\eps^2)$, and previous work improved this to $O(1/\eps)$. In this paper, we show that much less space suffices -- we give an algorithm that uses space $O(\log^4 \eps^{-1})$ in the streaming setting while also using an asymptotically optimal number of samples. This is in contrast with the standard total variation distance on discrete distributions for which such space reduction is known to be impossible. Finally, we state 9 related open problems that we hope will spark interest in this and related problems.
\end{abstract}
%

\section{Introduction}
\label{sec:introduction}

Testing hypotheses about continuous distributions is a fundamental topic in statistics, with applications in a wide range of fields of science. However, in theoretical computer science -- and the area of distribution testing specifically -- the focus has been mostly on discrete distributions. As such, there are many algorithmic questions regarding testing continuous distributions, which remain unexplored.

One of the most fundamental problems in distribution testing is the so-called \emph{identity testing}. In this problem, we get a sample from a distribution $D$ and we want to know whether $D = D^*$ for a fixed distribution $D^*$. Namely, we want with probability $1-\delta$ to accept if $D = D^*$ and reject if $D$ has distance $\geq \eps$ from $D^*$ in some given metric.
This question has been studied extensively in statistics. Indeed, the famed Kolmogorov--Smirnov test (see \Cref{sec:preliminaries}) solves exactly this problem under the metric that is today called the Kolmogorov (K) distance, which is defined as the $L_\infty$ distance of the cumulative distribution functions. This test is known to be asymptotically optimal, using $\Theta(\frac{\log 1/\delta}{\eps^2})$ samples (see \Cref{sec:related-work} for details).

However, the Kolmogorov--Smirnov test has relatively high space complexity. If we assume that we receive the samples one by one and we can only loop over the data once,
it naively needs space 
$\Theta(\log(1/\delta)/\eps^2)$ 
for storing all samples.
This has been improved in previous work to space $\Theta(1/\eps)$ 
using two different approaches by \cite{cardoso2023online, lall15}.
We discuss their techniques further along with other related work in Section~\ref{sec:related-work}.
In this paper, we show that one can do much better. Namely, we show that one only needs polylogarithmic space for identity testing under Kolmogorov distance, while still achieving asymptotically optimal sample complexity. Specifically, we give a simple algorithm such that the following holds:

\smallskip \noindent
\textbf{Theorem~\ref{thm:main-theorem}, restated.}
\emph{
There exists an algorithm that (i) has sample complexity $O(\log(1/\delta)/\eps^2)$, (ii) has space complexity $O(\log^4 1/\eps)$, and (iii) it is correct for testing identity for any two distributions\footnote{It turns out we may assume the distributions are continuous without loss of generality. We do this throughout this paper, but our results also hold for non-continuous distributions. See the discussion in Section~\ref{sec:preliminaries}.} $D,D^*$ w.r.t.\ the Kolmogorov distance with probability at least $1-\delta$ (with probability $1-\delta$ it accepts if $D = D^*$ and rejects if $d_K(D,D^*) \geq \eps$).}
\smallskip \noindent

Our result is in contrast with identity testing under the standard total variation distance on discrete distributions. There, one cannot asymptotically decrease the space complexity by more than a logarithmic factor without asymptotically increasing the sample complexity \cite{diakonikolas2019communication}. On the one hand, without space constraints, sample complexity $\Theta(\sqrt{n}/\eps^2)$ is optimal \cite{valiant2017automatic} for $n$ being the support size of $D^*$. On the other hand, any algorithm that uses $s$ bits has to use $\Omega(\frac{n}{s \eps^2})$ samples \cite{diakonikolas2019communication}. For details on this and other related work, see \Cref{sec:related-work}. For even more background, we refer the reader to the recent survey on statistical inference under space constraints \cite{berg2023statistical}.

It should be noted that the space complexity of our algorithm can be further decreased at the cost of increasing the sample complexity. In fact, when explaining below the intuition behind our algorithm, we describe a simpler approach that leads to a lower space complexity but suboptimal sample complexity. This way, one can obtain an algorithm with constant space complexity and sample complexity $O(\frac{\mathrm{poly \, log (1/\eps)}}{\eps^2})$.

We hope that our approach will inspire further research into related topics. To this end, we state nine open problems in \Cref{sec:open_problems}.

\subsection{Our techniques}
We now sketch our approach. We describe here a less technical version that adds a polylogarithmic factor to the sample complexity. In the actual proof, we take care to ensure we do not asymptotically increase the sample complexity.

The high-level idea is that depending on $D^*$, we divide the real line into (non-disjoint) buckets bases on dyadic intervals. We then prove that if $d_K(D,D^*) \geq \eps$, then there has to be at least one of these buckets which has significantly different probabilities under the two distributions. We then estimate the probability of each bucket under the distribution $D$ and check whether it significantly differs from the probability under $D^*$, in which case we reject. If there is no such bucket, we accept.

The difficulty with this approach is that there will be more buckets than the amount of space we can use. Specifically, there are $\Theta(1/\varepsilon)$ buckets and we want an algorithm that uses only polylogarithmic space. This means that we cannot estimate the buckets' probabilities all at once using the same sample. Instead, we do it sequentially (first estimating the probability of some buckets, then of other buckets, and so on). The crux then lies in how to do this without increasing the sample complexity, since with this approach, we have to use fresh samples for estimating different buckets.

We get around this issue using the following idea: The lower probability a bucket has, the more such buckets there will be, but the fewer samples we need to estimate its probability to within a fixed additive error. We now describe this idea in more detail.


Suppose $d_K(D,D^*) \geq 10\eps$. We divide the real line into buckets each having probability $2^{-j}$ under $D^*$ for $j = 1, \cdots, \lceil \lg 1/\eps \rceil$ as follows. For each $j$, we let $B_{1,j}$ be a bucket starting with $-\infty$ that has probability $2^{-j}$, then $B_{2,j}$ is an interval with probability $2^{-j}$ whose left endpoint is the right endpoint of $B_{1,j}$, and so on. We claim that there has to be a bucket $B_{i,j}$ such that $|P[D \in B_{i,j}] - P[D^* \in B_{i,j}]| \geq \frac{\eps}{\lg 1/\eps}$.

To this end, we consider a point $y$ such that $|P[D \leq y] - P[D^* \leq y]| \geq 10\eps$ (note that such $y$ exists by the assumption that the K distance is $\geq 10\eps$).
We then argue that we can approximately write the interval $(-\infty, y]$ as a disjoint union of at most $\lceil \lg 1/\eps \rceil$ of the buckets $B_{i,j}$.  
We look at the binary expansion of the value of $P[D^* \leq y]$ and we consider the first $\lceil \lg 1/\eps \rceil$ bits in order. If the $j$-th bit is set to $1$, we add a bucket that has probability $2^{-j}$ in $D^*$ to the union, choosing the
bucket whose left endpoint is equal to the right endpoint of the bucket added last. This way, we obtain a set of buckets with size $\leq \lceil \lg 1/\eps \rceil$ that approximately covers the interval $(-\infty, y]$.
The remaining difference between the intervals is contained in a single bucket that has probability $\leq \varepsilon$ under $D^*$. 

Suppose for the sake of contradiction that the probability of each of the buckets under $D$ and $D^*$ differs by less than $\frac{\eps}{\lg 1/\eps}$. We then argue that by the triangle inequality, the probability of the interval $(-\infty, y]$ under the two distributions differ by less than $10\eps$. This is a contradiction with our definition of $y$. This means that there has to exist $i,j$ such that $|P[D \in B_{i,j}] - P[D^* \in B_{i,j}]| \geq \frac{\eps}{\lg 1/\eps}$. 

However, there are too many buckets for us to estimate all of their probabilities at once (in fact, there are $\Theta(1/\eps)$ different buckets, which we cannot afford to store at once due to space constraints). However, we claim that our job is easier for lower-probability buckets. We need to estimate the probabilities to within additive error $\frac{\eps}{\lg 1/\eps}$. By a standard calculation of variance and using the Chebyshev inequality, if the bucket's probability is $p$, we need $O(p \cdot \log^2 \eps^{-1} / \eps^2)$ samples (for constant failure probability, which we can amplify). This is very convenient for us: there are $2^j$ buckets on level $j$ each with probability $2^{-j}$. This means that we can afford to estimate their frequencies one by one using fresh samples each time for a total number of samples
\[
2^j \cdot O(2^{-j} \cdot \log^2 \eps^{-1} / \eps^{-2}) = O\left(\frac{\log^2 \eps^{-1}}{\eps^2}\right) \,.
\]
We do this at once separately on each of the $O(\lg \eps^{-1})$ levels, for a logarithmic space complexity and sample complexity $O\left(\frac{\log^2 \eps^{-1}}{\eps^2}\right)$ for constant failure probability. A more careful analysis can be used to shave off the logarithmic factors from the sample complexity at a polylogarithmic cost to the space complexity, thus obtaining at once polylogarithmic space complexity and asymptotically optimal sample complexity.

\section{Preliminaries}
\label{sec:preliminaries}

\paragraph{Notation and setup of the paper.}
We write $D$ for the unknown distribution from which we draw i.i.d. samples, and we write $D^*$ for the fixed known distribution. The support for both $D$ and $D^*$ can be any totally ordered set $U$.
For a distribution $D$, $D(S)$ denotes the probability of $S$ for $S$ being a measurable set.
We abuse notation and write $P[D \leq x]$ to denote $P_{X \sim D}[X \leq x]$ for a value $x$.
We denote the inverse of the CDF of $D^*$ as $q$, defined as $q_x = \sup\{y \mid P[D^* \leq y] \leq x\}$, where $q_0 = -\infty$ and $q_1 = \infty$ are special cases.
We use $[k]$ for $k \in \mathbb{N}$ to denote $\{1, \cdots, k\}$. We use $\mathbb{I}[\cdot]$ for a given condition to be the random event where the condition is satisfied.
Finally, we write $\lg x$ for the base-2 logarithm of any $x \in \mathbb{R^+}$.

\paragraph{Kolmogorov distance and Kolmogorov-Smirnov test.} 
Let us have two distributions $D_1,D_2$ over the same ordered set $U$. We define the Kolmogorov distance $d_K(D_1, D_2)$ as $\sup_{x \in U} |P_{X_1 \sim D_1}[X_1 \leq x] - P_{X_2 \sim D_2}[X_2 \leq x]|$. In other words, the Kolmogorov distance of two distributions is defined as the $L_\infty$ distance on their CDFs.
The Kolmogorov--Smirnov statistic is the distance between the emperical CDF of a sample and the reference CDF. 
Given a sorted sample $(X_1, X_2, \dots, X_n)$ of $n$ data points drawn from $D$ the standard Kolmogorov--Smirnov test computes the KS statistic as $\max_i(\max(i/n - q_{X_i}, q_{X_i} - (i-1)/n))$. 
The null hypothesis where $D = D^*$ is rejected if the KS statistic is sufficiently high based on $n$ and accepted failure probability.

\paragraph{Continuous vs. non-continuous distributions.} Throughout this paper, we assume that all distributions are continuous, that is, that any singleton event has probability zero. We claim that this is without loss of generality -- if a distribution $D$ on the universe $U$ is not continuous, we may replace it with a distribution $D' = D \otimes \text{Uniform}([0,1])$ on $U \times [0,1]$, with lexicographic order. If we have sampling access to $D$, we may easily simulate samples from $D'$ and at the same time, the K distance does not decrease (in fact, it stays the same) by performing this transformation.

\paragraph{Chernoff bounds.}
We use Chernoff bounds in the proofs for the correctness guarantees of our algorithm.
Specifically, we rely on Chernoff bounds to bound the value of a binomial distribution.
Here we list the specific bounds that are prerequisites for following the proofs. 
The forms of the bounds differ slightly from typical textbook presentations. 
We perform some simple calculations here to simplify the presentation of the proofs in \Cref{sec:one-sample-test}.

\begin{fact}
    \label{fact:chernoff_bound}
    Let $X$ be a random variable distributed as $Bin(n, p)$ and let $\mu = E[X] = np$. Then it holds that
    \begin{equation} \label{eq:chern-simpler-standard}
        P[X/n > p + \Delta] \leq e^{-\Delta^2n/(2p + \Delta)} \,.
    \end{equation}
    
    Furthermore, for any $\Delta < \mu$ it holds that
    \begin{equation} \label{eq:chern-less-than}
    P[X/n < p - \Delta] \leq e^{-\Delta^2n/(2p)} \,,
    \end{equation}
    
    and
    \begin{equation} \label{eq:chern-abs-distance}
    P[\vert X - \mu \vert > \Delta] \leq 2e^{-\Delta^2/(3\mu)} \,.
    \end{equation}
\end{fact}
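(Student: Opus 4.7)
The plan is to derive all three inequalities from the standard multiplicative Chernoff bounds for a $\text{Bin}(n,p)$ random variable $X$ with $\mu = np$, namely
\[
P[X \geq (1+\delta)\mu] \leq \exp\!\left(-\tfrac{\delta^2 \mu}{2+\delta}\right) \quad \text{and} \quad P[X \leq (1-\delta)\mu] \leq \exp\!\left(-\tfrac{\delta^2 \mu}{2}\right),
\]
which I would invoke as a textbook fact rather than re-deriving via moment generating functions.

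For \eqref{eq:chern-simpler-standard}, I would substitute $\delta = \Delta/p$ into the upper-tail bound; then $(1+\delta)\mu = np + n\Delta$, so $\{X/n > p+\Delta\}$ is exactly $\{X > (1+\delta)\mu\}$, and the exponent simplifies as
\[
\frac{\delta^2 \mu}{2+\delta} \;=\; \frac{(\Delta/p)^2 \, np}{2 + \Delta/p} \;=\; \frac{\Delta^2 n}{2p + \Delta},
\]
which is exactly the claimed bound. For \eqref{eq:chern-less-than}, the same substitution $\delta = \Delta/p$ in the lower-tail bound yields exponent $-\Delta^2 n/(2p)$; the stated side condition $\Delta < \mu$ is more than enough, since all we really need is $\Delta \leq p$ so that $\delta \leq 1$, and for $\Delta > p$ the event $\{X/n < p - \Delta\}$ has probability zero anyway.

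For \eqref{eq:chern-abs-distance}, I would set $\delta = \Delta/\mu$ so that $(1\pm\delta)\mu = \mu \pm \Delta$, apply both bounds, and sum them via a union bound. The uniform denominator $3\mu$ is obtained by coarsening both exponents: in the interesting regime $\delta \leq 1$ (otherwise the lower tail is vacuous) we have $\delta^2 \mu/(2+\delta) \geq \delta^2 \mu / 3$ and trivially $\delta^2 \mu/2 \geq \delta^2 \mu / 3$, and substituting $\delta = \Delta/\mu$ turns each exponent into $\Delta^2/(3\mu)$; the factor $2$ comes from the two-tail union bound. I do not expect any real obstacle — the entire derivation is routine algebraic simplification, and the only design decision is whether to keep the sharper $2p+\Delta$ denominator of \eqref{eq:chern-simpler-standard} or coarsen it to $3$ as in \eqref{eq:chern-abs-distance}, which the paper apparently does to streamline the later calculations in \Cref{sec:one-sample-test}.
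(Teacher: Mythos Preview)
Your proposal is correct and follows essentially the same approach as the paper: both set $\delta = \Delta/p$ and substitute into the standard multiplicative Chernoff bounds to obtain \eqref{eq:chern-simpler-standard} and \eqref{eq:chern-less-than}, and both obtain \eqref{eq:chern-abs-distance} via a union bound with $\delta = \Delta/\mu$. The only cosmetic difference is that the paper cites the $2e^{-\Delta^2/(3\mu)}$ form directly from a reference rather than explicitly showing the coarsening $2+\delta \leq 3$ as you do; your treatment of the side condition (noting the lower-tail event is vacuous when $\Delta > p$) is slightly more careful than the paper's.
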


\begin{proof}
    Throughout this proof we let $\delta = \Delta/p$.
    We find the first bound by
    \begin{align*}
        P[X/n > p + \Delta] = P[X > \mu + n\Delta] &= P[X > (1 + \delta)\mu] \leq \exp(-\delta^2\mu/(2 + \delta)) \\
        &= \exp(-(\Delta/p)^2 np/(2 + \Delta/ p)) =
        \exp(-\Delta^2n/(2p + \Delta)) \,,
    \end{align*}
    where the inequality is a common simpler variant of the standard Chernoff bound~(see \url{https://www.lkozma.net/inequalities_cheat_sheet/ineq.pdf}).
    
    The second bound follows from 
    \[
    P[X/n < p - \Delta] = P[X < \mu - n\Delta] = P[X < (1 - \delta)\mu] \leq e^{-\delta^2\mu/2} = e^{-\Delta^2n/(2p)} \,,
    \]
    where the inequality is a standard Chernoff-Hoeffding bound~\cite[Theorem~1.1]{dubhashi:panconesi:2009}.

    Finally, for the the third bound we have that  
    \[
        P[\vert X - \mu \vert > \Delta] \leq P[X < \mu - \Delta] + P[X > \mu + \Delta] = P[X < (1 - \Delta/\mu)\mu] + P[X > (1 + \Delta/\mu)\mu] \leq 2e^{-\Delta^2/(3\mu)}  \,,
    \]
    where the first inequality is a union bound and the second inequality follows from standard Chernoff-Hoeffding bounds~\cite[Theorem~1.1]{dubhashi:panconesi:2009}.
\end{proof}

\paragraph{Models of computation.} We consider the KS test in the one-pass streaming model. In this model, we receive the samples from $D$ one by one and access each sample only once.
Alternatively, we could consider a model where the algorithm has oracle access to samples from $D$, and our goal is to satisfy a certain sampling budget and space constraint.
The algorithm is equivalent in both models, but we present our work under the first model as it is more common in the computer science litterature.
We give our space complexity in the $w$-bit word RAM model as defined by~\cite{Hagerup98} for $w$ large enough so that a single word can represent a sample from the distribution or a counter of size $O(1/\eps^2 + \log \delta^{-1})$.
Each counter stored by our algorithm fits in a single word since all counts are at most this large.
We do not consider the space requirements for representing $q$  or $D^*$.
Our algorithm requires access to $q_{i2^{-j}}$ for each $i \in [2^j]$ where $j = \lceil \lg 1/\varepsilon \rceil + 2$.
We assume that these $\Theta(1/\varepsilon)$ entries of $q$ are either stored efficiently or can be computed when necessary from a compact representation of $D^*$. 

\section{Our algorithm}
\label{sec:one-sample-test}

In this section, we present our algorithm and prove our main theorem, Theorem~\ref{thm:main-theorem}. Throughout the proof, we work with failure probability $\delta = 1/10$. This can be amplified by standard probability amplification to general $\delta$ as we discuss in the proof of Theorem~\ref{thm:main-theorem}.

Recall that we have two distributions $D,D^*$, we may sample from $D$ and we know $D^*$.
We define a bucket as $B_{i,j} = (q_{(i-1)2^{-j}},q_{i 2^{-j}}]$, where we have $q_x = \sup \{y \mid P[D^* \leq y] \leq x\}$ with special cases $q_0 = -\infty$ and $q_1 = \infty$.


\begin{lined}
\textbf{Our algorithm.} We describe a subroutine for a fixed value of $j$. We then execute this subroutine for all $j \in [\lceil \lg 1/\eps \rceil + 2]$ in parallel (that is, we feed any sample 
as an input to all these subroutines).

\bigskip \noindent
For a given value $j$, we consider the first $\Theta( \log^3 1/\eps)$ buckets $B_{i,j}$ and we have a counter for each (we take fewer buckets if fewer remain). 
We then take $t = c \cdot \min(\frac{1}{\eps^2}, {\frac{(\lg 1/\eps + 3)^3}{2^j \eps^2}})$ samples where $c$ is a sufficiently high constant. 
Based on them, we estimate the probability under $D$ of each $B_{i,j}$ that we are storing. If the estimated probability differs from the probability under $D^*$ (that is, $2^{-j}$) by more than $\Delta = \max({\frac{\eps}{j^{2}},\frac{\eps}{\lg 1/\eps+3}})/20$, we reject. We then take the next $\Theta({\log^3 1/\eps})$ buckets and repeat, until we have gone through all $2^j$ buckets $B_{\cdot, j}$.

\bigskip \noindent
If none of the subroutines rejected, we accept.

\bigskip \noindent
Use probability amplification to amplify success probability to $1-\delta$: execute this algorithm sequentially $\Theta(\log 1/\delta)$ times and return the majority answer.
\end{lined}

\paragraph{Intuition behind parameter choice.} A curious reader might ask why we choose the values of $t$ and $\Delta$ the way we do. The intuition behind our algorithm that we described above corresponds to choosing $t = \Theta(\frac{(\lg 1/\eps)^3}{2^j \eps^2})$ and $\Delta = \Theta(\eps/\log(1/\eps))$. Namely, we want to check whether the probability of a bucket is within $\Theta(\eps/\log(1/\eps))$ of $2^{-j}$ and we want to be correct with probability $1-\Theta(\eps)$ so as to be able to union-bound over the $O(1/\eps)$ intervals. The number of samples this requires is equal to this choice of $t$, by Chernoff's inequality. The issue comes from the fact that for small values of $j$, the value of $t$ is greater than the total number of samples we have available, which is $\Theta(1/\eps^2)$.

A naive solution is to simply enforce that $t$ is never more than $1/\eps^2$. This is precisely what we choose to do. However, our estimates would then not be accurate enough. To this end, we relax the requirement for our estimates for smaller values of $j$. First, we have to settle for a higher failure probability. This is not a large issue -- we union bound the failure probability over different values of $j$, meaning that as long as the failure probability decreases sufficiently fast in $j$, the probabilities still add up to a constant.

Second, we also have to settle for lower accuracy -- for example, for $j=1$, we have $2^{-j} = 1/2$, in which case $1/\eps^2$ samples only gives us accuracy to within additive $\Theta(\eps)$, instead of the desired $\Theta(\eps/\log(1/\eps))$. In \Cref{lem:buckets-distance}, we will thus prove a stronger claim which allows us to have lower accuracy for smaller values of $j$. Namely, for small values of $j$, we test whether the probability differs from $2^{-j}$ by $\eps/j^2$. There is no reason why use specifically $j^2$ -- we could use any polynomial in $j$ or even an exponential with a sufficiently small base.

\paragraph{Analyzing our algorithm.} Before we are ready to prove correctness of our algorithm, we prove the following lemma, which states that if indeed $d_K(D,D^*) \geq \eps$, then there has to be a bucket $B_{i,j}$ whose probability differs significantly under the two distributions.

\begin{lemma}
\label{lem:buckets-distance}
Let us have two continuous distributions $D,D^*$ over a totally ordered set $U$. 
Suppose that $d_K(D,D^*) \geq \eps$. Then there exists $j \in [\lceil\lg 1/\eps\rceil +2]$ and $i \in [2^j]$ such that $|D^*(B_{i,j}) - D(B_{i,j})| \geq 2 \Delta =  \max\left(\frac{\eps}{j^{2}},\frac{\eps}{\lg 1/\eps+3}\right)/10$.
\end{lemma}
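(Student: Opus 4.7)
The plan is a covering argument via binary expansion, followed by contradiction using the triangle inequality, essentially formalizing the intuition given in Section~1.1. First, I would pick a point $y \in U$ with $|P[D \leq y] - P[D^* \leq y]| \geq \eps$ (up to an arbitrarily small slack) and set $p^* = P[D^* \leq y]$.

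The key construction expresses the prefix $(-\infty, y]$ approximately as a disjoint union of selected buckets, indexed by the binary expansion of $p^*$. Let $J = \lceil \lg 1/\eps \rceil + 2$ and let $\tilde p$ be the truncation of $p^*$ to its first $J$ binary digits $b_1, \ldots, b_J$. Walking through these digits in order, every time a bit $b_j$ equals $1$ I append the unique level-$j$ bucket whose left endpoint coincides with the running partial sum, so that the chosen buckets exactly partition $(-\infty, q_{\tilde p}]$ (up to measure zero). The residual interval $(q_{\tilde p}, y]$ has $D^*$-mass $p^* - \tilde p \leq 2^{-J} \leq \eps/4$ and, crucially, is contained in a single level-$J$ bucket $B_{i^*, J}$.

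Next I would assume for contradiction that $|D(B_{i,j}) - D^*(B_{i,j})| < 2\Delta(j) := \max(\eps/j^2, \eps/(\lg 1/\eps + 3))/10$ for every $i,j$, and split $(-\infty, y] = (-\infty, q_{\tilde p}] \,\sqcup\, (q_{\tilde p}, y]$. The triangle inequality over the cover gives $|D((-\infty, q_{\tilde p}]) - D^*((-\infty, q_{\tilde p}])| < \sum_{j=1}^{J} 2\Delta(j)$, while the residual contributes at most $D(B_{i^*,J}) + D^*(B_{i^*,J}) < 2\cdot 2^{-J} + 2\Delta(J) \leq \eps/2 + \eps/10$, where the first summand uses that $J = \lceil \lg 1/\eps \rceil + 2$ is calibrated precisely so that $2\cdot 2^{-J} \leq \eps/2$, and the second summand uses the contradiction hypothesis applied to the single bucket $B_{i^*, J}$.

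The main technical step — and the reason the peculiar max-of-two form of $\Delta$ appears — is bounding $\sum_{j=1}^{J} 2\Delta(j)$. Splitting the sum at $j \approx \sqrt{\lg 1/\eps + 3}$, the small-$j$ terms are dominated by $\eps/j^2$ and sum to at most $(\eps/10)\sum_{j \geq 1} 1/j^2 = \eps\pi^2/60$; the remaining at most $J$ terms each equal $\eps/(10(\lg 1/\eps + 3))$ and contribute at most $\eps/10$. Combining, the cover contributes $< 0.27 \eps$, giving a total upper bound for $|P[D \leq y] - P[D^* \leq y]|$ of roughly $0.27\eps + \eps/2 + \eps/10 < \eps$, contradicting the choice of $y$. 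The main obstacle is just bookkeeping of constants — verifying that the convergent $\sum 1/j^2$ contribution, the residual-bucket slack, and the single level-$J$ correction all fit inside $\eps$ — which is exactly what motivates both the $j^2$ in the denominator of $\Delta$ and the two extra levels in the choice $J = \lceil \lg 1/\eps\rceil + 2$.
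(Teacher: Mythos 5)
Your proposal is correct and follows essentially the same route as the paper's proof: select a witness point, cover its prefix by dyadic buckets chosen via the binary expansion of the reference CDF value truncated to $\lceil \lg 1/\eps\rceil+2$ bits, bound the residual by its containment in a single level-$(\lceil \lg 1/\eps\rceil+2)$ bucket, and derive a contradiction from the triangle inequality with the same constant budget. The only (immaterial) differences are bookkeeping: you bound $\sum_j \max(\cdot,\cdot)$ by splitting at the crossover rather than by $\max(a,b)\le a+b$, and you charge the residual directly to the bucket $B_{i^*,J}$ under the contradiction hypothesis instead of first deducing $P[D\in(q_{\tilde p},y]]>\eps/2$ as the paper does.
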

\begin{proof}
Let $x \in [0,1]$ be such that $\vert P[D \leq q_x] - P[D^* \leq q_x] \vert \geq \eps$. Note that such $x$ exists by the assumption $d_K(D,D^*) \geq \eps$. Our goal will be to approximately write the set $(-\infty, q_x]$ as a disjoint union of some of the buckets $B_{i,j}$. We then use this to show that if no such bucket exists, then it has to be the case that $|P[D \leq q_x] - P[D^* \leq q_x]| < \eps$, which would be a contradiction.

Let $b_1b_2\cdots$ be the binary expansion of $x$. We now define $x_{:\ell} = b_1\cdots b_\ell$ and we let $\tilde{x} = x_{:\lceil \lg 1/\eps \rceil + 2}$. Note that $x - \varepsilon/4 \leq \tilde{x} \leq x$. This means that $|P[D^* \leq q_{\tilde{x}}] - P[D^* \leq q_x]| = x - \tilde{x} \leq \eps/4$.

We use the binary encoding of $\tilde{x}$ to write the set $(-\infty, q_{\tilde{x}}]$ as a disjoint union of some of the buckets $B_{i,j}$. Namely, we define a set of indices $I$ such that $(-\infty, q_{\tilde{x}}] = \bigcup_{(i,j) \in I} B_{i,j}$ and such that this union is disjoint. 
We construct this set inductively. Namely, we define $I_0, \cdots, I_{\lceil \lg 1/\eps \rceil + 2}$ where for any $k$, $I_k$ satisfies that $(-\infty, q_{x_{:k}}] = \bigcup_{(i,j) \in I_{k}} B_{i,j}$. 
First, we set $I_{0} = \emptyset$. Suppose now that we have the set $I_{k}$; we inductively construct $I_{k+1}$. If $b_{k+1} = 0$ then $x_{:k+1} = x_{:k}$ and we may thus set $I_{k+1} = I_{k}$. If, on the other hand $b_{k+1}=1$, then it holds
\[
(-\infty, q_{x_{k+1}}] \setminus (-\infty, q_{x_{k}}] = (q_{x_{:k}}, q_{x_{:k+1}}] = B_{x_{:k}2^{k+1} + 1, k+1} \,, 
\]
meaning that we may set $I_{k+1} = I_k \cup \{(x_{:k}2^{k+1} + 1, k+1)\}$.
Finally, we define $I=I_{\lceil \lg 1/\eps \rceil + 2}$. 


Assume now for the sake of contradiction that $|D^*(B_{i,j}) - D(B_{i,j})| < 2\Delta = \max(\frac{\eps}{j^{2}},\frac{\eps}{\lg 1/\eps+3})/10$ for all $B_{i,j}$ where $(i,j) \in I$. We bound 
\begin{align*}
| P[D \leq q_x] - P[D^* \leq q_x]| =&\, |\sum_{i,j \in I} D(B_{i,j}) + P[D \in (q_{\tilde{x}}, q_x]] \, - \, \sum_{i,j \in I} D^*(B_{i,j}) - P[D^* \in (q_{\tilde{x}}, q_x]] | \\
\leq&\, \sum_{i,j \in I} |D(B_{i,j}) - D^*(B_{i,j})| + |P[D \in (q_{\tilde{x}}, q_x]] - P[D^* \in (q_{\tilde{x}}, q_x]]| \\
<&\, \sum_{j=1}^{\lceil \lg 1/\eps \rceil+2} \max\left(\frac{\eps}{j^{2}},\frac{\eps}{\lg 1/\eps+3}\right)/10 + |P[D \in (q_{\tilde{x}}, q_x]] - P[D^* \in (q_{\tilde{x}}, q_x]]| \\
<&\, \eps/2 + |P[D \in (q_{\tilde{x}}, q_x]] - P[D^* \in (q_{\tilde{x}}, q_x]]| \,,
\end{align*}
where we have used the calculation
\[
\sum_{j=1}^{\lceil \lg 1/\eps \rceil+2} \max \left(\frac{\eps}{j^{2}},\frac{\eps}{\lg 1/\eps+3}\right) < \sum_{j=1}^{\lceil \lg 1/\eps \rceil+2} \left(\frac{\eps}{j^{2}} + \frac{\eps}{\lg 1/\eps+3}\right) < 1.65 \cdot \eps + \eps < 5 \cdot \eps \,.
\]
Therefore, if the left-hand side is $\geq \eps$ as we assume, it has to be the case that $|P[D \in (q_{\tilde{x}}, q_x] - P[D^* \in (q_{\tilde{x}}, q_x]| > \eps/2$. As we have noted, it holds that $P[D^* \in (q_{\tilde{x}}, q_x]] \leq \eps/4$, meaning that it has to be the case that $P[D \in (q_{\tilde{x}}, q_x]] > \eps/2$. 
But this event is a subset of the event $D \in B_{\tilde{x}2^{j} + 1, j}$ 
for $j = \lceil \lg 1/\eps \rceil + 2$. The probability of outputting a value in this bucket under $D^*$ is $2^{-\lceil \lg 1/\eps \rceil + 2} \leq \eps/4$, so the difference of the probability under the two distributions is $> \eps/4$, which is $> 2\Delta$. This is a contradiction, concluding the proof.
\end{proof}

Next, we bound the failure probability of our algorithm. We separately consider the cases where $D$ is either equal to or $\varepsilon$-distinct from $D^*$. 
We bound the failure probability for a single execution of our algorithm by $1/10$ in each case. 
We then use probability amplification to achieve failure probability at most $\delta$.

\begin{lemma}
\label{lem:reject-distinct}
Suppose that $d_K(D,D^*) \geq \varepsilon$. 
Then our algorithm, without performing probability amplification, accepts with probability at most $1/10$.
\end{lemma}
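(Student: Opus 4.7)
The plan is to combine Lemma~\ref{lem:buckets-distance} with a single-bucket Chernoff argument. Since the algorithm accepts only when every inspected bucket on every level passes its test, it is enough to exhibit a single bucket whose empirical frequency violates the rejection window with probability at least $9/10$. I would use the pair $(i^*, j^*)$ provided by Lemma~\ref{lem:buckets-distance}: on level $j^*$, in the chunk of buckets containing $B_{i^*,j^*}$, the subroutine computes an empirical frequency $\hat p := X/t_{j^*}$ with $X \sim \mathrm{Bin}(t_{j^*}, p)$, where $p := D(B_{i^*,j^*})$ and $t_{j^*}, \Delta_{j^*}$ are as in the algorithm description. By the lemma, $|p - 2^{-j^*}| \geq 2\Delta_{j^*}$, so it suffices to show that $\hat p$ escapes the interval $[2^{-j^*} - \Delta_{j^*},\, 2^{-j^*} + \Delta_{j^*}]$ with probability at least $9/10$.

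To bound $P[|\hat p - 2^{-j^*}| \leq \Delta_{j^*}]$, I would split into cases according to the relation between $p$ and $2^{-j^*}$. When $p$ lies within a constant factor of $2^{-j^*}$ (either $p \leq 2^{-j^*} - 2\Delta_{j^*}$ or $p \in [2^{-j^*} + 2\Delta_{j^*},\, 2 \cdot 2^{-j^*}]$), applying \eqref{eq:chern-simpler-standard} or \eqref{eq:chern-less-than} with deviation $\Delta_{j^*}$ and denominator $O(2^{-j^*})$ yields a failure probability of $\exp(-\Omega(\Delta_{j^*}^2 \, t_{j^*} \, 2^{j^*}))$. When $p$ is much larger than $2^{-j^*}$, the rejection condition is far weaker than concentration of $\hat p$ around $p$: invoking \eqref{eq:chern-less-than} with the larger effective deviation $p - 2^{-j^*} - \Delta_{j^*} = \Omega(p)$ turns the exponent into $\Omega(p \cdot t_{j^*})$, which is easily bounded below by a large constant in this regime.

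The main obstacle is verifying that $\Delta_{j^*}^2 \, t_{j^*} \, 2^{j^*}$ is bounded below by a positive universal constant uniformly in $j^*$ and $\eps$. Its value depends on which piece of the max defining $\Delta_{j^*}$ and which piece of the min defining $t_{j^*}$ is active, producing three ranges of $j^*$ to check separately: $j^* \leq \sqrt{\lg 1/\eps + 3}$, the intermediate range, and $2^{j^*} > (\lg 1/\eps + 3)^3$. A direct computation in each range produces such a constant, and choosing the constant $c$ in the definition of $t_{j^*}$ sufficiently large then forces the per-bucket failure probability below $1/10$, which yields the lemma.
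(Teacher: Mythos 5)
Your proposal is correct and follows essentially the same route as the paper's proof: invoke Lemma~\ref{lem:buckets-distance} to obtain a single bad bucket, bound the probability that its empirical frequency stays within $\Delta$ of $2^{-j}$ via the stated Chernoff bounds, and verify the exponent $\Omega(\Delta^2 t\, 2^{j})$ is a large constant by case analysis on the active branches of the $\min$ and $\max$. The only cosmetic difference is that the paper handles the ``$p$ far from $2^{-j}$'' regime by a monotonicity reduction to the boundary cases $p = 2^{-j} \pm 2\Delta$ rather than by your separate large-deviation argument; both work.
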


\begin{proof}
By \Cref{lem:buckets-distance}, there exists a bucket $B_{i,j}$ such that $| D(B_{i,j}) - D^*(B_{i,j}) | = | D(B_{i,j}) - 2^{-j} | \geq 2 \Delta = \max(\frac{\varepsilon}{j^{2}}, \frac{\varepsilon}{\lg 1/\varepsilon + 3})/10$. The algorithm estimates the probability under $D$ of this bucket using $t = c \cdot \min (\frac{1}{\eps^2}, \frac{(\lg 1/\eps + 3)^3}{2^j \eps^2})$ samples; let $Z_{i,j}$ be this estimate. 
The algorithm can only accept if $|Z_{i,j} - 2^j| \leq \Delta$. 
We bound this probability.

We start with the case $D(B_{i,j}) \geq 2^{-j} + 2\Delta$. In this case, we prove that $P[Z_{i,j} < 2^{-j} + \Delta] \leq 1/10$. In general, the probability that a binomial distribution $Bin(n,p)$ is below a threshold for a fixed $n$ is a decreasing function of $p$. This means that the probability $P[Z_{i,j} < 2^{-j} + \Delta]$ is decreasing in $D(B_{i,j})$. As such we may assume that $D(B_{i,j}) = 2^{-j} + 2\Delta$. This allows us to use the Chernoff bound (see \Cref{eq:chern-less-than}) to bound:
\[
P[Z_{i,j} < 2^{-j} + \Delta] \leq \exp\left(-\frac{t\Delta^2}{2 (2^{-j} + 2\Delta)}\right) \leq \exp\left(-\frac{t\Delta^2}{4 \cdot 2^{-j}}\right)  \,,
\]
where the last inequality uses that $2^{-j} \geq 2^{-\lceil \lg 1/\eps \rceil - 2} \geq \eps/8$ whereas $\Delta \leq \eps/20$, meaning that $2^{-j} + 2 \Delta \leq 2 \cdot 2^{-j}$.
A similar argument can be used in the case $D(B_{i,j}) \leq 2^{-j} - 2\Delta$ to give the exact same upper bound. Similarly to the above, we may assume $D(B_{i,j}) = 2^{-j} - 2\Delta$ because we bound $P[Z_{i,j} > 2^{-j} - \Delta]$ which is increasing in $D(B_{i,j})$. 
In that case, we have $D(B_{i,j}) \geq \eps/8 - 2 \cdot \eps/20 = \eps/40$. This means that in this case, we have $\Delta/D(B_{i,j}) \leq 2$. This allows us to again use the Chernoff bound (\Cref{eq:chern-simpler-standard}).
\[
P[Z_{i,j} > 2^{-j} - \Delta] \leq \exp\left(-\frac{t\Delta^2}{2 \cdot D(B_{i,j}) + \Delta}\right) \leq \exp\left(-\frac{t\Delta^2}{4 (2^{-j} - 2\Delta)}\right) < \exp\left(-\frac{t\Delta^2}{4 \cdot 2^{-j}}\right) \,.
\]
It remains to argue that this is $\leq 1/10$.
%
We consider separately the two branches of the $\min$ in $t$.
First, we consider when $t = c/\eps^2$.
In this case, we substitute and bound
\[
\exp\left(-\frac{t\Delta^2}{4 \cdot 2^{-j}}\right) \leq \exp\left(-\frac{c}{\eps^2} \cdot \frac{\eps^2}{400 j^{4}} \cdot \frac{1}{4 \cdot2^{-j}}\right) = \exp\left(-\frac{c}{1600j^{4}2^{-j}}\right) < 0.1 \,,
\]
where the last inequality holds for a large enough value of $c$.
Similarly, if $t = c \cdot \frac{(\lg 1/\eps + 3)^3}{2^j \eps^2}$, we may bound
\[
\exp\left(-\frac{t\Delta^2}{4 \cdot 2^{-j}}\right) \leq \exp\left(-{c \cdot \frac{(\lg 1/\eps + 3)^3}{2^j \eps^2} \cdot \left(\frac{\eps}{20(\lg 1/\eps + 3)}\right)^2} \cdot \frac{1}{4 \cdot 2^{-j}}\right) \leq \exp\left(-\frac{c (\lg 1/\eps + 3)}{1600}\right) < 0.1 \,,
\]
where the last inequality again holds for $c$ large enough.
\end{proof}

\begin{lemma}
\label{lem:accept-equal}
Suppose that $D = D^*$. 
Then our algorithm, without performing probability amplification, rejects with probability at most $1/10$.
\end{lemma}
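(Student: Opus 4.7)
The plan is to mirror the structure of the proof of Lemma~\ref{lem:reject-distinct}, but with the roles reversed. When $D = D^*$, every bucket satisfies $D(B_{i,j}) = 2^{-j}$, so $t Z_{i,j} \sim \mathrm{Bin}(t, 2^{-j})$ is an unbiased estimator, and the algorithm rejects iff some estimate $Z_{i,j}$ deviates from $2^{-j}$ by more than $\Delta$. First I would apply a union bound over all buckets,
\[
P[\text{reject}] \leq \sum_{j=1}^{\lceil \lg 1/\eps \rceil + 2} \sum_{i=1}^{2^j} P\bigl[|Z_{i,j} - 2^{-j}| > \Delta\bigr],
\]
and bound each summand using the absolute-value Chernoff bound \eqref{eq:chern-abs-distance} applied to $X = tZ_{i,j}$ with mean $\mu = t \cdot 2^{-j}$, which yields
\[
P\bigl[|Z_{i,j} - 2^{-j}| > \Delta\bigr] \leq 2 \exp\!\left(-\frac{t \Delta^2}{3 \cdot 2^{-j}}\right).
\]

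I would then run the same two-branch case analysis on $t$ that appears at the end of Lemma~\ref{lem:reject-distinct}. In the branch $t = c(\lg 1/\eps + 3)^3/(2^j \eps^2)$ (i.e., $2^j > (\lg 1/\eps + 3)^3$), the Chernoff exponent simplifies to $\Theta(c(\lg 1/\eps + 3))$ regardless of which term of the $\max$ defining $\Delta$ is active, so the per-bucket failure is polynomially small in $\eps$; since only $O(1/\eps)$ buckets lie in this regime, their combined contribution to the union bound is negligible. In the branch $t = c/\eps^2$ (i.e., $2^j \leq (\lg 1/\eps + 3)^3$), the exponent becomes $c \cdot 2^j / \bigl(1200 \min(j^4, (\lg 1/\eps + 3)^2)\bigr)$, and the per-level contribution $2 \cdot 2^j \exp(-c \cdot 2^j/(1200 j^4))$ is summable in $j$ with a sum that can be made arbitrarily small by increasing $c$, using the elementary fact that $\sup_{j \geq 1} j^5/2^j$ is a finite universal constant (attained near $j = 7$).

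The main obstacle is that, unlike in Lemma~\ref{lem:reject-distinct} (which only needed per-bucket failure at most $1/10$ for one specific bucket), here the per-bucket failures must be small enough to survive a union bound over all $O(1/\eps)$ buckets simultaneously. The delicate point is thus to verify that a single universal choice of $c$ makes the total sum at most $1/10$ uniformly in $j$; the worst level turns out to be a small constant value of $j$ where $j^4/2^j$ is maximized, and the resulting constant is simply absorbed into the ``sufficiently high constant'' $c$ from the algorithm description.
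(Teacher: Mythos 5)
Your proposal is correct and follows essentially the same route as the paper's proof: a union bound over all buckets, the two-sided Chernoff bound \eqref{eq:chern-abs-distance} applied to the binomial count with mean $t2^{-j}$, and a case split on the two branches of the $\min$ defining $t$, with the $\eps/j^2$ branch of $\Delta$ making the level-$j$ sum $\sum_j 2^{j+1}\exp(-\Theta(c)\,2^j/j^4)$ summable and the $\eps/(\lg 1/\eps+3)$ branch making the per-bucket failure $O(\eps)$ against the $O(1/\eps)$ total bucket count. The only differences are cosmetic (you carry the $\min/\max$ explicitly where the paper just picks the relevant branch of the $\max$ in each case), and your closing observation about absorbing $\sup_j j^5/2^j$ into $c$ is exactly the calculation the paper hides in ``for $c$ large enough.''
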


\begin{proof}
    The algorithm draws $t = c \cdot \min(1/\varepsilon^2, (\lg 1/\varepsilon + 3)^3/(2^j \varepsilon^2))$ samples to estimate the probability $D(B_{i,j})$, for $c$ being a sufficiently large constant that consider later.
    Let $X_{i, j}$ denote the random variable counting the number of these $t$ samples in the interval $B_{i,j}$.
    Our algorithm rejects only if for at least one bucket the event $|X_{i, j} - t/2^j| \geq t \Delta = t \cdot \max(\frac{\varepsilon}{j^{2}}, \frac{\varepsilon}{\lg 1/\varepsilon + 3})/20$ occurs.
    
    We define $I = \{i,j \mid j \in [\lceil \lg 1/\varepsilon \rceil + 2] \land i \in [2^j]\}$ and we define $I_1$ as the subset of $(i,j) \in I$ where $\frac{1}{\varepsilon^2} \leq \frac{(\lg 1/\varepsilon + 3)^3}{2^j \varepsilon^2}$ and $I_2$ the remaining elements of $I$. That is, $I_1$ corresponds to the indices where the first branch of the $\min$ in $t$ is smaller, whereas the second branch is smaller in $I_2$.
    Let $Y_{i,j} = \mathbb{I}[|X_{i, j} - t/2^j| \geq t \Delta]$ denote the event where we reject based on bucket $B_{i,j}$.
    We use a union bound for the probability of any $Y_{i,j}$ occurring.
    \[
        P\left[\bigcup_{i,j \in I} Y_{i,j} \right] \leq 
        \sum_{i,j \in I} P\left[ Y_{i,j} \right]
        = \sum_{i,j \in I_1} P\left[Y_{i,j} \right] + \sum_{i,j \in I_2} P\left[Y_{i,j} \right] \,.
    \]
    

    The random variable $X_{i,j}$ follows a binomial distribution with expected value of $\mu = t / 2^j$. 
    Using a standard Chernoff bound we have that 
    $P[\vert X_{i, j} - t / 2^j \vert \geq \delta] \leq 2e^{-\delta^2/(3\mu)}$ (see \Cref{eq:chern-abs-distance}).

    We first consider the case where $t = c / \varepsilon^2$, corresponding to the indices in $I_1$. We bound the probability of rejecting due to bucket $B_{i,j}$ as

    \begin{align*}
        P\left[ \vert X_{i,j} - t/2^j \vert \geq t \cdot \max\left(\frac{\varepsilon}{j^{2}}, \frac{\varepsilon}{\lg 1/\varepsilon + 3}\right)/20 \right] &\leq 
        P\left[ \vert X_{i,j} - t/2^j \vert \geq \frac{t \cdot \varepsilon}{20j^{2}} \right] \\
        \leq 2\exp\bigg(-\left(\frac{t \cdot \varepsilon}{20j^{2}}\right)^2/(3\mu)\bigg)
        &= 2\exp\bigg(-\left(\frac{c}{20j^{2}\varepsilon}\right)^2 \cdot \frac{\varepsilon^2 2^j}{3c}\bigg)
        \leq 2\exp\left(-\frac{200 \cdot 2^j}{j^{4}}\right) \,,
    \end{align*}
    where the last inequality holds for $c$ large enough.

    There are $2^j$ buckets for each value of $j \in [\lceil \lg 1/\varepsilon \rceil + 2]$. 
    It follows from a union bound that the probability of rejecting any of the buckets with $t = c/\varepsilon^2$ samples is upper bounded by

    \begin{align*}
         \sum_{i, j \in I_1} P\left[Y_{i,j}\right] \leq \sum_{j = 1}^{[\lceil \lg 1/\varepsilon \rceil + 2]} 2^j 2\exp\left(-\frac{200 \cdot 2^j}{j^{4}}\right) < \sum_{j = 1}^\infty 2^j 2\exp\left(-\frac{200 \cdot 2^j}{j^{4}}\right) < 0.05 \,.
    \end{align*}
    
    Next we consider the case where $t = \frac{c (\lg 1/\varepsilon + 3)^3}{2^j \varepsilon^2}$, corresponding to $I_2$. We bound the probability of rejecting based on bucket $B_{i,j}$ as

    \begin{align*}
        P\left[ \vert X_{i,j} - t/2^j \vert \geq t \cdot \max\left(\frac{\varepsilon}{j^{2}}, \frac{\varepsilon}{\lg 1/\varepsilon + 3}\right)/20 \right] &\leq 
        P\left[\vert X_{i,j} - t/2^j \vert \geq \frac{t \cdot \varepsilon}{20(\lg 1/\varepsilon + 3)}\right] \\
        \leq 2e^{-(\frac{t \cdot \varepsilon}{20(\lg 1/\varepsilon + 3)})^2/(3\mu)} 
        = 2e^{-(\frac{\varepsilon}{20(\lg 1/\varepsilon + 3)})^2 2^j t/3} 
        &= 2e^{-\frac{c \, (\lg 1/\varepsilon + 3)^3}{20^2 \cdot 3 \cdot (\lg 1/\varepsilon + 3)^2}} 
        \leq 2e^{-2(\lg 1/\varepsilon + 3)}
        \leq \varepsilon/e^6 \,,
    \end{align*}
    where the inequality second to last holds for $c$ large enough.

    The total number of buckets is bounded by $\sum_{j = 1}^{[\lceil \lg 1/\varepsilon \rceil + 2]} 2^j < 2^{\lg 1/\varepsilon + 4} = 16/\varepsilon$.
    As such, using a union bound we have that
    \[
    \sum_{i,j \in I_2} P\left[Y_{i,j} \right] < 16/\varepsilon \cdot \varepsilon/e^6 < 0.05 \,.
    \]       
\end{proof}
    
We are now ready to state the theorem, which summarizes the properties of our algorithm.

\begin{theorem}
\label{thm:main-theorem}There exists an algorithm that (i) has sample complexity $O(\log (1/\delta)/\eps^2)$, (ii) has space complexity $O(\log^4 1/\eps)$, and (iii) it is correct for testing identity for any two distributions $D,D^*$ w.r.t.\ the Kolmogorov distance with probability at least $1-\delta$ (with probability $1-\delta$ it accepts if $D = D^*$ and rejects if $d_K(D,D^*) \geq \eps$).
\end{theorem}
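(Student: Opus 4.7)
The plan is to assemble \Cref{thm:main-theorem} from the two correctness lemmas \Cref{lem:reject-distinct} and \Cref{lem:accept-equal}, together with the sample and space bounds that follow directly from the algorithm description, and then amplify the success probability by standard repetition.

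For correctness, I would observe that one execution of the algorithm (before amplification) errs with probability at most $1/10$ in both the $D = D^*$ case (by \Cref{lem:accept-equal}) and the $d_K(D,D^*) \geq \eps$ case (by \Cref{lem:reject-distinct}). Running the algorithm $\Theta(\log 1/\delta)$ times sequentially and returning the majority answer then drives the failure probability below $\delta$ via a standard Chernoff bound on the number of correct executions; this is the only place the $\log(1/\delta)$ factor appears.

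For sample complexity the key observation is that the subroutines for different $j$ run in parallel on a shared sample stream, so one execution consumes $\max_j s_j$ samples, where $s_j$ is the sample usage of the $j$-th subroutine. I would bound $s_j = O(1/\eps^2)$ by a case split on the two branches of $t = c \cdot \min(1/\eps^2, (\lg 1/\eps + 3)^3/(2^j \eps^2))$. When the first branch is active, $2^j = O(\log^3 1/\eps)$, so the number of batches of $\Theta(\log^3 1/\eps)$ buckets is $O(1)$ and thus $s_j = O(1/\eps^2)$. When the second branch is active, the number of batches is $O(2^j / \log^3 1/\eps)$, which multiplied by $t$ gives $s_j = O(1/\eps^2)$ after the $2^j$ factors cancel. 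Multiplying the per-run bound by the amplification factor $\Theta(\log 1/\delta)$ yields the claimed $O(\log(1/\delta)/\eps^2)$.

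For space, each of the $O(\log 1/\eps)$ parallel subroutines maintains $O(\log^3 1/\eps)$ counters plus $O(1)$ auxiliary words tracking the current batch, bucket index, and intra-batch sample count; by the word-RAM assumption each counter fits in a single word. Summing across levels gives $O(\log^4 1/\eps)$ space, and the sequential amplification loop reuses this space. The main obstacle I foresee is the per-level sample accounting, particularly at the boundary value of $j$ where the two branches of the $\min$ cross; this is purely careful arithmetic, since all of the substantive technical content is already packaged in the two lemmas.
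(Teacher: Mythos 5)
Your proposal is correct and follows essentially the same route as the paper's proof: correctness from the two lemmas plus majority-vote amplification, space from $O(\log 1/\eps)$ parallel subroutines each holding $O(\log^3 1/\eps)$ counters, and the per-subroutine sample bound $\lceil 2^j/\Theta(\log^3 1/\eps)\rceil \cdot t = O(1/\eps^2)$ via the same case split on which branch of the $\min$ dominates. No gaps.
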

\begin{proof}
The space complexity is clearly as claimed: each subroutine uses space $O(\log^3 1/\eps)$ and we execute $\lceil\lg 1/\eps\rceil + 2$ subroutines at once. The probability amplification does not increase space complexity as it only needs one additional counter for storing how many of the executions accepted.
Our algorithm, without performing probability amplification, is correct with probability $9/10$ by Lemma~\ref{lem:reject-distinct} and~\ref{lem:accept-equal}. This is amplified to $1-\delta$.

It remains to argue the sample complexity. We argue that without probability amplification, it is $O(1/\eps^2)$. The claim then follows since probability amplification adds a factor $\log 1/\delta$, implying the sample complexity is as claimed. We are executing the subroutines in parallel, meaning that it suffices to argue that each subroutine uses $O(1/\eps^2)$ samples. The number of buckets for a fixed value of $j$ is $2^j$ and they are processed in batches of size $O(\log^3 1/\eps)$. This means that the complexity of a subroutine is
\[
\Bigl\lceil\frac{2^j}{\log^3 1/\eps} \Bigr\rceil \cdot c \cdot \min\left(\frac{1}{\eps^2}, \frac{(\lg 1/\eps + 3)^3}{2^j \eps^2}\right) = O\left( \frac{1}{\eps^2} \right) \,,
\]
where the equality can be seen by a simple calculation if we consider separately the cases $\frac{2^j}{\log^3 1/\eps} \leq 1$ and $\frac{2^j}{\log^3 1/\eps} > 1$. 
%
%
%
\end{proof}

\section{Related work}
\label{sec:related-work}

Apart from the papers referenced below, we refer the interested reader to the recent survey on statistical inference under memory constraints \cite{berg2023statistical}.

\paragraph{Statistics literature.} In the statistical literature, the problem of identity testing has been studied since the 1930s when Kolmogorov and Smirnov developed their test~\cite{smirnov1939estimation,kolmogorov1933sulla}. It is easy to see that this test is optimal up to constant factors: On one hand, the sample complexity of $O(\log (1/\delta)/\eps^2)$ immediately follows from the Dvoretzky–Kiefer–Wolfowitz inequality \cite{dvoretzky1956asymptotic}. On the other hand, it is well-known that one needs $\Omega(\log(1/\delta)/\eps^2)$ samples to distinguish between Bernoulli distributions with parameter $1/2$ and $1/2 + \eps$.\footnote{Specifically, the total variation distance of $Bin(m,1/2), Bin(m,1/2+\eps)$ is $TV = O(\eps \sqrt{m})$ \cite[formula 15]{roos2001binomial}, meaning that one has to set $m = \Omega(1/\eps^2)$ to make the distance constant. (Note that the success probability is $\leq 1-TV$.) A more detailed computation shows optimality also in terms of $\delta$.}
But the Kolmogorov--Smirnov test can be used to distinguish between these two distributions\footnote{We can add a value drawn from e.g. $\text{Uniform}([0,1/2])$ to the Bernoulli samples to obtain samples following continuous distributions. The distributions are $\varepsilon$-distinct as the probabilities of outputting a value $\leq 1/2$ clearly differ by $\varepsilon$.},
so the sample complexity of the Kolmogorov--Smirnov test is optimal.

Many generalizations and other tests have been developed in the literature. This includes the Cramér–von Mises test~\cite{cramer1928composition,vonmises1928} or the Anderson-Darling test \cite{anderson1954test} which both put more weight on the tails of the distribution or the multivariate version of the Kolmogorov--Smirnov test \cite{justel1997multivariate}. We refer the reader wanting to know more about this area to the chapter on goodness of fit in the book \cite{lehmann1986testing}.

\paragraph{Applied work on small-memory Kolmogorov--Smirnov test.} It has been studied how to estimate the Kolmogorov--Smirnov statistic in streams in small amount of memory. 
The estimate can then be used to accept or reject the hypothesis similar to the classical KS test.
This approach differs from our technique in that we directly reject or accept the hypothesis without computing an estimate of the KS statistic.
We discuss here three different techniques for estimating the KS statistic and argue why each requires $\Omega(1/\varepsilon)$ space.

An estimator with error at most $3\varepsilon$ based on approximate quantiles was presented by~\cite{lall15}.
Their technique relies on any sketch data structure that returns a data point from a stream with length $n$ with rank error at most $\varepsilon n$.
They extract from this sketch a subset of the stream of size at least $1/(2\varepsilon)$.
They use this subset to estimate the KS statistic with error at most $2\varepsilon$.
As such they must use space $\Omega(1/\varepsilon)$ even ignoring any auxiliary space required to maintain such a sketch.

The problem has also been studied in the applied literature by~\cite{cardoso2023online} in a setting where they want to efficiently estimate the KS statistic at any point in the stream.
They store $m$ non-overlapping buckets each with probability $1/m$ under the reference distribution $D^*$.
Since they are repeatably estimating the KS statistic they use a greedy heuristic to update the estimate in time $O(\log(m))$ after observing each data point.
Their data structure can be easily used to estimate the KS statistic with error $\leq 1/m$ in time $O(m)$.
As such they need $\Theta(1/\varepsilon)$ buckets to estimate the KS statistic with error $\varepsilon$.
We also use $\Theta(1/\varepsilon)$ buckets based on the CDF on $D^*$.
However, we only need to store a small subset of these buckets at any point since our technique relies on estimating the probability for individual buckets under $D$ rather than the full CDF.

Lastly, an estimator based on chunking and averaging was presented by~\cite{nguyen2017streamsuitablekolmogorovsmirnovtypetestbig}.
They split the stream into chunks of length $J$.
They compute the KS statistic on each chunk and store a running average.
They show that the running average approaches the expected value in the limit.
The value of $J$ should therefore be chosen sufficiently high to ensure that the expected value of the test is close to the actual Kolmogorov distance.
We have that $J = \Omega(1/\varepsilon)$ since changing a single data point in a sample of length $J$ can change the KS statistic by $1/J$.

\paragraph{Discrete distributions.} We now discuss identity testing for discrete distributions (under the standard total variation distance). This problem has also been studied in statistics, but the tests have suboptimal dependency on the support size, see the review by \citet{balakrishnan2018hypothesis} for a detailed historical background. The optimal complexity in terms of the support size was given by \cite{batu2001testing} and this was later improved to also be optimal in terms of $\eps$ by \cite{valiant2017automatic}. 
What is interesting to us is that this problem has also been studied under various constraints such as privacy constraints 
 (see \cite{acharya2021inference} and references within) or communication constraints \cite{diakonikolas2019communication,acharya2021interactive}. However, to us the most interesting is the study of identity testing under space constraints.

The reduction in \cite{goldreich2016uniform} implies that any result for the special case of uniformity testing also extends to identity testing. \citet{diakonikolas2019communication} shows an algorithm with space $s$ and sample complexity $O(\frac{n \log n}{s \eps^4})$ and they show a lower-bound which is near-matching in terms of the dependency on $n$ and $s$. They also show a lower-bound which matches up to a constant factor for $s$ not being too small. The upper-bound was later simplified by \citet{canonne2022uniformity}. In the setting where the number of samples available is unlimited, \citet{berg2022memory} study the necessary state of a finite state machine that tests uniformity.

\section{Open problems} \label{sec:open_problems}
Many interesting problems remain open in bounded-memory algorithms for distribution testing of non-discrete distributions. We now describe some of them.

\paragraph{Testing identity with sampling access to two distributions.} Suppose that instead of knowing one of the two distributions in question, we only have sampling access to both. This corresponds to the two-sample KS test in the statistics literature. Is it possible to solve this problem in polylogarithmic space as well?

\paragraph{Improving the number of samples used.} Our result ignored the constant in the sample complexity. Suppose that the KS test uses $S_{KS}$ samples for a given level of $\eps$. Is there an algorithm that uses $(1+\gamma)S_{KS}$ samples and uses memory $O(\log ^c \eps^{-1}/\gamma^c)$ for some constant $c$? If not, is there at least for each $\gamma>0$ an algorithm with space complexity $O(\log^c \eps^{-1})$.

\paragraph{Approximating the KS statistic in $o(1/\eps)$ space.} It would be of great practical interest if a streaming algorithm existed for approximating the KS statistic is space $o(1/\eps)$ (space $O(1/\eps)$ can be achieved using buckets as discussed in Section~\ref{sec:related-work}).
Our algorithm can be modified to give an approximation up to a factor of $O(\log \eps^{-1})$. Can better approximation be achieved in polylogarithmic space?

\paragraph{Multi-dimensional distributions.} Generalizations of the K distance exist that are defined for multi-dimensional distributions. 
Are there efficient streaming algorithms for identity testing also in this case?

\paragraph{Testing relationships between multiple distributions.}
Suppose we have sampling access to a distribution on $\mathbb{R}^2$. Give a streaming algorithm for testing whether these two coordinates are independent. 
Does the second coordinate stochastically dominate the first? Are the coordinates positively associated? 

\paragraph{Tests that are more efficient at the tails.} Design a space-efficient algorithm for identity testing with respect to some distance that allows for distinguishing smaller differences on the tails. In the statistics literature, 
the Anderson-Darling test~\cite{anderson1954test} or the Cramér-von Mises test~\cite{cramer1928composition, vonmises1928} solves this problem. 
Is it possible to get a more space-efficient algorithm that approximately implements either of these two tests?

\paragraph{Improve the space complexity.} The exponent in our algorithm is higher than one would perhaps hope. Is this inherent to the problem or can this be improved? Lower bounds would be especially interesting. It should be noted that in the lower bound, one will have to somehow use the fact that the algorithm has optimal sample complexity -- as we noted in the introduction, it is possible to achieve constant space complexity at the cost of slightly higher sample complexity.

\paragraph{Extend the results to tolerant distribution testing.} For identity testing, in standard distribution testing, we ought to accept if the distributions are equal and reject if their Kolmogorov distance is $\geq \eps$. In tolerant distribution testing, one ought to accept if the distance is $< \eps_1$ and reject if it is $\geq \eps_2$. Many results are significantly more difficult in this setting than with standard distribution testing. Is this also the case for identity testing w.r.t.\ the Kolmogorov distance in the streaming setting?

\paragraph{Distributed setting.} Suppose we are in the following setting, instead of the streaming setting. We have multiple machines with each storing part of the data. We want to test identity or any of the problems from the above open problems. How do we do this while communicating as few bits as possible between the machines?
%

\section*{Acknowledgments}

Lebeda carried out parts of this work at the IT University of Copenhagen.
The authors would like to thank Bernhard Haeupler for helping improve the presentation of this paper.
We thank Gautam Kamath for helpful comments on terminology.
Finally, we thank anonymous reviewers for suggestions and comments that helped improve the paper.

\bibliographystyle{plainnat}
\bibliography{literature}

\begin{thebibliography}{24}
\providecommand{\natexlab}[1]{#1}
\providecommand{\url}[1]{\texttt{#1}}
\expandafter\ifx\csname urlstyle\endcsname\relax
  \providecommand{\doi}[1]{doi: #1}\else
  \providecommand{\doi}{doi: \begingroup \urlstyle{rm}\Url}\fi

\bibitem[Acharya et~al.(2021{\natexlab{a}})Acharya, Canonne, Freitag, Sun, and Tyagi]{acharya2021inference}
Jayadev Acharya, Cl{\'e}ment~L Canonne, Cody Freitag, Ziteng Sun, and Himanshu Tyagi.
\newblock Inference under information constraints iii: Local privacy constraints.
\newblock \emph{IEEE Journal on Selected Areas in Information Theory}, 2\penalty0 (1):\penalty0 253--267, 2021{\natexlab{a}}.

\bibitem[Acharya et~al.(2021{\natexlab{b}})Acharya, Canonne, Liu, Sun, and Tyagi]{acharya2021interactive}
Jayadev Acharya, Cl{\'e}ment~L Canonne, Yuhan Liu, Ziteng Sun, and Himanshu Tyagi.
\newblock Interactive inference under information constraints.
\newblock \emph{IEEE Transactions on Information Theory}, 68\penalty0 (1):\penalty0 502--516, 2021{\natexlab{b}}.

\bibitem[Anderson and Darling(1954)]{anderson1954test}
Theodore~W Anderson and Donald~A Darling.
\newblock A test of goodness of fit.
\newblock \emph{Journal of the American statistical association}, 49\penalty0 (268):\penalty0 765--769, 1954.

\bibitem[Balakrishnan and Wasserman(2018)]{balakrishnan2018hypothesis}
Sivaraman Balakrishnan and Larry Wasserman.
\newblock {Hypothesis testing for high-dimensional multinomials: A selective review}.
\newblock \emph{The Annals of Applied Statistics}, 12\penalty0 (2):\penalty0 727 -- 749, 2018.
\newblock \doi{10.1214/18-AOAS1155SF}.
\newblock URL \url{https://doi.org/10.1214/18-AOAS1155SF}.

\bibitem[Batu et~al.(2001)Batu, Fischer, Fortnow, Kumar, Rubinfeld, and White]{batu2001testing}
Tugkan Batu, Eldar Fischer, Lance Fortnow, Ravi Kumar, Ronitt Rubinfeld, and Patrick White.
\newblock Testing random variables for independence and identity.
\newblock In \emph{Proceedings 42nd IEEE Symposium on Foundations of Computer Science}, pages 442--451. IEEE, 2001.

\bibitem[Berg et~al.(2022)Berg, Ordentlich, and Shayevitz]{berg2022memory}
Tomer Berg, Or~Ordentlich, and Ofer Shayevitz.
\newblock On the memory complexity of uniformity testing.
\newblock In \emph{Conference on Learning Theory}, pages 3506--3523. PMLR, 2022.

\bibitem[Berg et~al.(2023)Berg, Ordentlich, and Shayevitz]{berg2023statistical}
Tomer Berg, Or~Ordentlich, and Ofer Shayevitz.
\newblock Statistical inference with limited memory: A survey.
\newblock \emph{arXiv preprint arXiv:2312.15225}, 2023.

\bibitem[Canonne and Lyu(2022)]{canonne2022uniformity}
Cl{\'e}ment~L Canonne and Hongyi Lyu.
\newblock Uniformity testing in the shuffle model: Simpler, better, faster.
\newblock In \emph{Symposium on Simplicity in Algorithms (SOSA)}, pages 182--202. SIAM, 2022.

\bibitem[Cardoso and Galeno(2023)]{cardoso2023online}
Douglas~O Cardoso and Thalis~D Galeno.
\newblock Online evaluation of the kolmogorov--smirnov test on arbitrarily large samples.
\newblock \emph{Journal of Computational Science}, 67:\penalty0 101959, 2023.

\bibitem[Cram{\'e}r(1928)]{cramer1928composition}
Harald Cram{\'e}r.
\newblock On the composition of elementary errors: First paper: Mathematical deductions.
\newblock \emph{Scandinavian Actuarial Journal}, 1928\penalty0 (1):\penalty0 13--74, 1928.

\bibitem[Diakonikolas et~al.(2019)Diakonikolas, Gouleakis, Kane, and Rao]{diakonikolas2019communication}
Ilias Diakonikolas, Themis Gouleakis, Daniel~M Kane, and Sankeerth Rao.
\newblock Communication and memory efficient testing of discrete distributions.
\newblock In \emph{Conference on Learning Theory}, pages 1070--1106. PMLR, 2019.

\bibitem[Dubhashi and Panconesi(2009)]{dubhashi:panconesi:2009}
Devdatt~P. Dubhashi and Alessandro Panconesi.
\newblock \emph{Concentration of Measure for the Analysis of Randomized Algorithms}.
\newblock Cambridge University Press, 2009.

\bibitem[Dvoretzky et~al.(1956)Dvoretzky, Kiefer, and Wolfowitz]{dvoretzky1956asymptotic}
Aryeh Dvoretzky, Jack Kiefer, and Jacob Wolfowitz.
\newblock Asymptotic minimax character of the sample distribution function and of the classical multinomial estimator.
\newblock \emph{The Annals of Mathematical Statistics}, pages 642--669, 1956.

\bibitem[Goldreich(2016)]{goldreich2016uniform}
Oded Goldreich.
\newblock The uniform distribution is complete with respect to testing identity to a fixed distribution.
\newblock In \emph{Electronic Colloquium on Computational Complexity (ECCC)}, volume~23, page~1, 2016.

\bibitem[Hagerup(1998)]{Hagerup98}
Torben Hagerup.
\newblock Sorting and searching on the word {RAM}.
\newblock In \emph{{STACS}}, volume 1373 of \emph{Lecture Notes in Computer Science}, pages 366--398. Springer, 1998.
\newblock \doi{10.1007/BFb0028575}.

\bibitem[Justel et~al.(1997)Justel, Pe{\~n}a, and Zamar]{justel1997multivariate}
Ana Justel, Daniel Pe{\~n}a, and Rub{\'e}n Zamar.
\newblock A multivariate kolmogorov-smirnov test of goodness of fit.
\newblock \emph{Statistics \& probability letters}, 35\penalty0 (3):\penalty0 251--259, 1997.

\bibitem[Kolmogorov(1933)]{kolmogorov1933sulla}
Andrey Kolmogorov.
\newblock Sulla determinazione empirica di una legge didistribuzione.
\newblock \emph{Giorn Dell'inst Ital Degli Att}, 4:\penalty0 89--91, 1933.

\bibitem[Lall(2015)]{lall15}
Ashwin Lall.
\newblock Data streaming algorithms for the kolmogorov-smirnov test.
\newblock In \emph{2015 IEEE International Conference on Big Data (Big Data)}, pages 95--104, 2015.
\newblock \doi{10.1109/BigData.2015.7363746}.

\bibitem[Lehmann et~al.(1986)Lehmann, Romano, and Casella]{lehmann1986testing}
Erich~Leo Lehmann, Joseph~P Romano, and George Casella.
\newblock \emph{Testing statistical hypotheses}, volume~3.
\newblock Springer, 1986.

\bibitem[Nguyen(2017)]{nguyen2017streamsuitablekolmogorovsmirnovtypetestbig}
Hien~Duy Nguyen.
\newblock A stream-suitable kolmogorov-smirnov-type test for big data analysis, 2017.
\newblock URL \url{https://arxiv.org/abs/1704.03721}.

\bibitem[Roos(2001)]{roos2001binomial}
Bero Roos.
\newblock Binomial approximation to the poisson binomial distribution: The krawtchouk expansion.
\newblock \emph{Theory of Probability \& Its Applications}, 45\penalty0 (2):\penalty0 258--272, 2001.

\bibitem[Smirnov(1939)]{smirnov1939estimation}
Nikolai~V Smirnov.
\newblock On the estimation of the discrepancy between empirical curves of distribution for two independent samples.
\newblock \emph{Bull. Math. Univ. Moscou}, 2\penalty0 (2):\penalty0 3--14, 1939.

\bibitem[Valiant and Valiant(2017)]{valiant2017automatic}
Gregory Valiant and Paul Valiant.
\newblock An automatic inequality prover and instance optimal identity testing.
\newblock \emph{SIAM Journal on Computing}, 46\penalty0 (1):\penalty0 429--455, 2017.

\bibitem[von Mises(1928)]{vonmises1928}
Richard~Edler von Mises.
\newblock \emph{Wahrscheinlichkeit, Statistik und Wahrheit}.
\newblock Verlag von Julius Springer, Wien, 1928.
\newblock \doi{10.1007/978-3-662-36230-3}.
\newblock [Probability, Statistics and Truth].

\end{thebibliography}
\end{document}